\documentclass[conference]{IEEEtran}
\IEEEoverridecommandlockouts
\usepackage{cite}
\usepackage{amsmath,amssymb,amsfonts}
\usepackage{algorithmic}
\usepackage{graphicx}
\usepackage{textcomp}
\usepackage{xcolor}

\def\BibTeX{{\rm B\kern-.05em{\sc i\kern-.025em b}\kern-.08em
    T\kern-.1667em\lower.7ex\hbox{E}\kern-.125emX}}

    \usepackage{subcaption}

\usepackage{cases}

\usepackage{amsthm} 
\newtheorem{claim}{Claim}

\usepackage{booktabs} 
\usepackage{makecell} 

\theoremstyle{remark}
\newtheorem{remark}{Remark}
    
\begin{document}

\bstctlcite{IEEEexample:BSTcontrol}
\title{Ambiguity-Resolved 
       Micro-Doppler Construction \\for  
       Asynchronous Bistatic Sensing}

\author{\IEEEauthorblockN{Yiyi Xu, Zhongqin Wang, Kai Wu, and J. Andrew Zhang}
\IEEEauthorblockA{School of Electrical and Data Engineering, University of Technology Sydney, NSW 2007, Australia \\
Emails: yiyi.xu@student.uts.edu.au, \{zhongqin.wang, kai.wu, andrew.zhang\}@uts.edu.au}
}
\maketitle

\begin{abstract}
Integrated sensing and communications (ISAC) is transforming wireless networks into pervasive sensing platforms, yet practical bistatic deployments are hindered by transceiver clock asynchrony, which distorts channel state information (CSI) through random phase fluctuations. {\color{black}CSI-ratio sanitization introduces nonlinear distortion,  limiting its scalability for multiple targets and complicating subsequent delay-domain and angle of arrival (AoA)-domain processing}, whereas cross-antenna conjugate multiplication (CACC) preserves a linear structure but leaves mirror ambiguity and the second-order by-products that affect motion-induced Doppler signatures. 
This paper develops a micro-Doppler construction framework that resolves the mirror ambiguity and suppresses the residual second-order by-products left after differential CACC. First, cyclic differencing suppresses the dominant mirror component. Then, we exploit the fact that the residual second-order terms lie at differenced delay coordinates and do not follow an ordered AoA steering structure to design a lightweight delay-AoA-Doppler transform-domain pipeline that isolates the desired kinematic response while preventing the residual terms from being coherently accumulated at the target bins. Finally, the filtered responses are aggregated into a higher-signal-to-noise-ratio micro-Doppler representation.
Ablation results verify the necessity of residual by-product suppression and the effectiveness of the proposed multidimensional filtering. 
Experiments on the large-scale WiFi gesture dataset show that the proposed ambiguity-resolved micro-Doppler generalizes consistently across four domain factors, achieving accuracies between 90.1\% and 98.2\%, and outperforming representative baselines by about 18 percentage points on average. 
These results show that suppression of residual second-order by-products facilitates unambiguous sensing in asynchronous bistatic ISAC systems.
\end{abstract}

\begin{IEEEkeywords}
Bistatic Sensing, Micro-Doppler, Gesture Recognition, Integrated Sensing and Communications.
\end{IEEEkeywords}

\section{Introduction}
Integrated sensing and communications (ISAC) is increasingly enabling wireless infrastructures to serve as sensing platforms~\cite{masouros2023guest}. In such systems, channel state information (CSI) offers a convenient measurement source to recover motion-induced propagation dynamics~\cite{wu2025isac,wang2025siso}. However, practical bistatic asynchronous transceivers introduce random phase offsets that distort motion kinematics~\cite{wu2024sensing}. 
Current phase-sanitization techniques remain insufficient to fully resolve this issue. For instance, linear fitting cannot fully eliminate timing and carrier offsets. {\color{black}While cross-antenna methods mitigate the random phase term more effectively, they introduce new problems. Specifically,} the CSI-ratio operation {\color{black}is vulnerable to low-amplitude reference measurements} and disrupts the linear structure required for subsequent processing{\color{black}~\cite{9904500}}. Cross-antenna conjugate multiplication (CACC) avoids reference-antenna division and preserves the linear structure, {\color{black}but introduces mirror counterparts~\cite{10737138}, which are difficult to distinguish 
from the desired signal due to the comparable amplitudes.   
Although differential CACC suppresses these mirror components, often-overlooked residual second-order by-products remain, contaminating the desired kinematic signatures.}
Consequently, existing methods struggle to simultaneously achieve robust random phase offset mitigation, multidimensional separability, and unambiguous micro-Doppler construction{\color{black}, which is essential for extracting fine-grained human kinematics in sensing applications, such as gesture and activity recognition.} 

Beyond the asynchronous-phase challenge, cross-domain generalization is another obstacle for wireless sensing. Physics-feature-based methods seek structured motion representations, such as canonical physical projections, motion change patterns, and trajectory-oriented descriptors~\cite{zhang2021widar3,li2020wihf,wu2022toward,zhang2022handgest}. Concurrently, many recent approaches feed CSI amplitudes and phases into neural networks to learn task-oriented representations~\cite{11062116,2025CaoReal,zhao2025crossfi}. Although these methods can achieve competitive performance, the resulting features often lack physical interpretability. Doppler-based representations, by contrast, encode motion direction and velocity evolution and are therefore physically interpretable. {\color{black}However, their generalization across different environments is often limited.} 

To address these challenges, we propose an ambiguity-resolved micro-Doppler construction framework for asynchronous bistatic sensing with a multi-antenna receiver. We first use CACC to obtain a numerically stable, linear-preserving representation, and then apply differencing to suppress the dominant mirror component. The residual second-order by-products are further attenuated through a lightweight delay-angle of arrival (AoA)-Doppler transform-domain pipeline. The delay transform coherently accumulates the motion component into its corresponding delay bin, while the second-order by-products are distributed across different delay bins and therefore behave as incoherent interference across delay bins. The AoA transform further weakens components that do not follow the ordered steering structure. 
The resulting multidimensional responses are finally aggregated into a compact micro-Doppler representation with enhanced per-bin signal-to-noise ratio (SNR)~\cite{chen2019micro}. Experiments on the large-scale Widar3.0 dataset{\color{black}~\cite{zhang2021widar3}} with 105,624 effective samples across diverse domains show strong leave-one-domain generalization, achieving 90.11\%--98.22\% accuracy and consistently outperforming representative feature baselines \cite{zhang2021widar3, li2020wihf, 2025CaoReal}. 


\section{Signal Model and CSI Preprocessing}\label{sec:signalModel_preprocess}
We first introduce the asynchronous bistatic CSI signal model and then present a CACC-based preprocessing for random phase offset removal and dominant mirror suppression.

\subsection{CSI Model}
\label{sec:csi_model}

We consider an asynchronous bistatic orthogonal frequency-division multiplexing (OFDM) sensing system with one transmit antenna and an $M$-element uniform linear array (ULA) at the receiver. The system operates at carrier frequency $f_c$ with wavelength $\lambda$, bandwidth $B$, and $N$ subcarriers with subcarrier spacing $\Delta f = B/N$. For the ULA with element spacing $\Delta d$, the spatial response at the $m$-th receive antenna for the AoA $\theta$ is $\gamma_m(\theta)=e^{-j2\pi (m-1)\Delta d \sin\theta/\lambda}$. 

Within one coherent processing interval (CPI), $K$ CSI snapshots are collected with sampling interval $\Delta t$, and the path parameters are assumed quasi-static over the CPI. Let $\mathbf{k}=[0,1,\ldots,K-1]^{\mathsf T}$ and $\mathbf{n}=[0,1,\ldots,N-1]^{\mathsf T}$ denote the slow-time and subcarrier index vectors. The corresponding Doppler and delay steering vectors for Doppler frequency $f^D$ and propagation distance $d$ are defined as $\mathbf{b}(f^D)=e^{-j2\pi \mathbf{k}f^D\Delta t}$, and $\mathbf{g}(d)=e^{-j2\pi \mathbf{n}\Delta f d/c}$, respectively.

The CSI matrix at the $m$-th receive antenna across $K$ snapshots and $N$ subcarriers is modeled as~\cite{wang2022single}
\begin{equation}
\mathbf{H}_m
=
(\mathbf{A}_m\odot \boldsymbol{\Phi}_m)\odot(\mathbf{S}_m+\mathbf{X}_m)
+\mathbf{W}_m,
\label{eq:Hi_matrix_rewrite}
\end{equation}
where $\odot$ denotes the Hadamard product, $\mathbf{A}_m$ captures the automatic gain control (AGC) through the snapshot gain vector $\mathbf{a}_m\in\mathbb{R}^{K}$, $\boldsymbol{\Phi}_m$ collects the time- and frequency-varying phase offsets induced by bistatic clock asynchrony together with the antenna-dependent static phase offset $\phi_m$, $\mathbf{S}_m$ denotes the static component, $\mathbf{X}_m$ denotes the dynamic component, and $\mathbf{W}_m$ is additive noise. In the subsequent derivation, the aggregate static background is locally approximated by a dominant equivalent reference component. Specifically,
\begin{equation}
\hspace{-.2cm}
\left\{
\begin{aligned}
\mathbf{A}_m &= \mathbf{a}_m\mathbf{1}_N^{\mathsf T},\\
\boldsymbol{\Phi}_m &= e^{-j(\boldsymbol{\varphi}+\mathbf{1}_K\phi_m\mathbf{1}_N^{\mathsf T})},\\
\mathbf{S}_m &= \mathbf{1}_K\big(\gamma_m(\theta_s)\mathbf{g}(d_s)\odot \boldsymbol{\rho}_{m,s}\big)^{\mathsf T},\\
\mathbf{X}_m &= \sum_{p=1}^{P}\mathbf{X}_{m,p}
=\sum_{p=1}^{P}\gamma_m(\theta_p)\mathbf{b}(f_p^D)\mathbf{g}(d_p)^{\mathsf T}\odot \boldsymbol{\rho}_{m,p}.
\end{aligned}
\right.
\label{eq:csi_components}
\end{equation}
Here, $\mathbf{S}_m$ is characterized by equivalent AoA $\theta_s$, path length $d_s$, and frequency-domain gain $\boldsymbol{\rho}_{m,s}\in\mathbb{C}^{N}$. The dynamic component $\mathbf{X}_m$ is the superposition of $P$ motion-induced paths, where the $p$-th path is characterized by Doppler frequency $f_p^D$, path length $d_p$, AoA $\theta_p$, and complex gain $\boldsymbol{\rho}_{m,p}\in\mathbb{C}^{K\times N}$. The cyclic pairwise differencing developed later involves three receive antennas, and hence $M\geq 3$.

\subsection{Random Phase Offset Removal and Mirror Suppression}
\label{subsec:rpr}

To mitigate random phase offsets, we adopt CACC rather than the conventional CSI-ratio operation, because CACC is less sensitive to perturbations when the reference-antenna amplitude is small, as shown by the following claim.

\begin{claim}[Local perturbation stability of CSI-ratio and CACC]
\label{claim:cacc_vs_ratio}
If the perturbation on the reference antenna is locally small, CSI-ratio becomes sensitive as the reference-antenna amplitude decreases, whereas CACC does not exhibit denominator-driven amplification. Hence, CACC has better local perturbation stability than the CSI-ratio in the low-amplitude regime.
\end{claim}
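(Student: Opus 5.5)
We compare the first-order sensitivity of the two cross-antenna operations to a small perturbation on the reference antenna. Fix one time--frequency entry and write the target-antenna CSI as $h_1$ and the reference-antenna CSI as $h_2 = r e^{j\psi}$ with amplitude $r>0$. Model the local perturbation on the reference as $h_2 \mapsto h_2+\delta$ with $|\delta|\le\epsilon$, $\epsilon\ll r$. The CSI-ratio is $R(h_2)=h_1/h_2$ and CACC is $C(h_2)=h_1 h_2^{*}$. Both cancel the common random phase $\boldsymbol{\varphi}$ in \eqref{eq:Hi_matrix_rewrite}, since it appears identically in $h_1$ and $h_2$ through $\boldsymbol{\Phi}_m$. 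The claim therefore reduces to comparing how each map responds to $\delta$.

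First, for the CSI-ratio, I will expand $1/(h_2+\delta)=h_2^{-1}\big(1-\delta/h_2+O(|\delta|^2/r^2)\big)$, valid when $|\delta|<r$. This gives
\begin{equation}
\Delta R = R(h_2+\delta)-R(h_2) = -\frac{h_1\delta}{h_2^{2}}+O\!\left(\frac{|h_1|\epsilon^2}{r^{3}}\right),
\end{equation}
so $|\Delta R|\approx |h_1|\epsilon/r^{2}$. The local Lipschitz constant (equivalently, the Wirtinger derivative magnitude $|h_1|/r^2$) diverges as $r\to 0$. I will also record the relative error $|\Delta R|/|R|\approx \epsilon/r$, which grows unboundedly as well. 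This is the denominator-driven amplification.

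Second, for CACC, the map is affine in $h_2^{*}$, so the expansion is exact:
\begin{equation}
\Delta C = h_1\delta^{*}, \qquad |\Delta C|\le |h_1|\epsilon .
\end{equation}
This bound is uniform in $r$, so the sensitivity does not grow as the reference amplitude decreases. In fact, $\sup_{|\delta|\le\epsilon}|\Delta C|/\sup_{|\delta|\le\epsilon}|\Delta R| \to r^2$, which vanishes in the low-amplitude regime. Together, the two bounds give the comparative statement: for fixed $|h_1|$ and $\epsilon$, the ratio between the CSI-ratio and CACC error bounds is $1/r^2\to\infty$ as $r\to0$.

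The main obstacle is deciding which notion of ``stability'' makes the comparison fair. The CACC output itself scales with $r$, so its relative error $|\Delta C|/|C|=\epsilon/r$ matches that of the ratio. I will therefore state the claim in terms of absolute (additive) perturbation of the sanitized signal. This is the relevant metric, because the downstream delay/AoA/Doppler transforms are linear and the additive error propagates unchanged. I will also note that the CSI-ratio additionally has an $O(\epsilon^2/r^3)$ remainder and an outright singularity when $|\delta|\ge r$, whereas CACC has neither. The argument covers AGC gains $\mathbf{a}_m$ by absorbing them into $r$, since they scale amplitudes but do not alter the structure of either expansion.
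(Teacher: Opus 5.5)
Your proposal is correct. It rests on the same mechanism as the paper's proof: a first-order expansion that exposes the $h_2^{-2}$ factor in the ratio (power $|\bar H_{m_2}|^{-4}$), against a CACC perturbation with no denominator. The two arguments differ in the perturbation model.

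The paper adds independent noise $w_{m_1}\sim\mathcal{CN}(0,\sigma_{m_1}^2)$ and $w_{m_2}\sim\mathcal{CN}(0,\sigma_{m_2}^2)$ to \emph{both} antennas. It then compares expected first-order perturbation powers: $\sigma_{m_1}^2/|\bar H_{m_2}|^2+|\bar H_{m_1}|^2\sigma_{m_2}^2/|\bar H_{m_2}|^4$ for the ratio versus $|\bar H_{m_2}|^2\sigma_{m_1}^2+|\bar H_{m_1}|^2\sigma_{m_2}^2$ for CACC. This matches the additive-noise model $\mathbf W_m$. It also shows that target-antenna noise is amplified by the ratio, while in CACC it is attenuated as $|\bar H_{m_2}|\to 0$.

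You perturb only the reference and use a deterministic worst-case bound. This gives you something the paper lacks. CACC is affine in $h_2^{*}$, and exactly $\sup_{|\delta|\le\epsilon}|\Delta R|=|h_1|\epsilon/(r(r-\epsilon))\ge |h_1|\epsilon/r^2$, which is unbounded once $\epsilon\ge r$. So your comparison holds non-asymptotically. Using this exact expression also repairs your slightly loose ``fixed $\epsilon$, $r\to0$'' sentence, which conflicts with your own $\epsilon\ll r$ assumption. The paper's ``$|\bar H_{m_2}|\to 0$ with fixed $\sigma_{m_2}$'' has the same looseness.

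Your observation that both relative errors equal $\epsilon/r$, so the claim is really about absolute perturbation, is correct. The paper leaves this implicit.

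To cover the full model, add a perturbation $\delta_1$ on $h_1$. Then $\Delta C=h_2^{*}\delta_1+h_1\delta_2^{*}+\delta_1\delta_2^{*}$ stays bounded uniformly in $r$, while $\Delta R$ gains a $\delta_1/h_2$ term.
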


\begin{proof}
Let $\bar{\mathbf H}_{m}$ denote the noise-free component of $\mathbf H_m$, so that $\mathbf H_m=\bar{\mathbf H}_m+\mathbf W_m$. Define the CSI-ratio and CACC as $\mathbf R_{m_1m_2}\triangleq \mathbf H_{m_1}\oslash \mathbf H_{m_2}$, and $\mathbf C_{m_1m_2}\triangleq \mathbf H_{m_1}\odot \mathbf H_{m_2}^{\ast}$, where $\oslash$ denotes the element-wise division, and $(\cdot)^{\ast}$ denotes complex conjugation. 
Since both operations are element-wise, it suffices to consider one entry, with subscripts omitted for brevity: $H_{m_1}=\bar H_{m_1}+w_{m_1}$, $
H_{m_2}=\bar H_{m_2}+w_{m_2}$, where $w_{m_1}\sim\mathcal{CN}(0,\sigma_{m_1}^2)$ and $w_{m_2}\sim\mathcal{CN}(0,\sigma_{m_2}^2)$ are independent additive noise terms. Assuming $|w_{m_2}|\ll |\bar H_{m_2}|$, the first-order perturbations are $\Delta R_{m_1m_2}
\approx {w_{m_1}}/{\bar{H}_{m_2}}
-{\bar{H}_{m_1}w_{m_2}}/{\bar{H}_{m_2}^2}$, and $\Delta C_{m_1m_2}
\approx
\bar{H}_{m_2}^{\ast}w_{m_1}
+
\bar{H}_{m_1}w_{m_2}^{\ast}$. 
Treating $\bar H_{m_1}$ and $\bar H_{m_2}$ as fixed local operating values and taking expectation with respect to the noise perturbations, we obtain
\begin{align}
\mathbb E\!\left[|\Delta R_{m_1m_2}|^2 \right]
&\approx
{\sigma_{m_1}^2}/{|\bar H_{m_2}|^2}
+
{|\bar H_{m_1}|^2\sigma_{m_2}^2}/{|\bar H_{m_2}|^4},
\nonumber\\
\mathbb E\!\left[|\Delta C_{m_1m_2}|^2 \right]
&\approx
|\bar H_{m_2}|^2\sigma_{m_1}^2
+
|\bar H_{m_1}|^2\sigma_{m_2}^2.
\label{eq:claim1_power_short}
\end{align}
The ratio perturbation contains the denominator-driven term $|\bar H_{m_2}|^{-4}$, whereas the CACC perturbation remains polynomial in the signal amplitudes. Therefore, as $|\bar H_{m_2}|\to 0$, the first-order perturbation power of CSI-ratio diverges, while that of CACC remains bounded.
\end{proof}


Motivated by Claim~\ref{claim:cacc_vs_ratio}, we use CACC to remove the antenna-shared random phase term $\boldsymbol{\varphi}$ while preserving a linear structure for subsequent processing. The element-wise conjugate product of antennas $m_1$ and $m_2$ yields
\begin{equation}
\mathbf{H}_{m_1}\odot \mathbf{H}_{m_2}^{\ast}
=
\mathbf{U}_{m_1m_2}
+
\mathbf{V}_{m_1m_2}
+
\mathcal{O}(\mathbf{W})
+
\mathcal{O}(\mathbf{W}^2),
\label{eq:Z_cacc_matrix_revised}
\end{equation}
where
\begin{equation}
\left\{
\begin{aligned}
\mathbf{U}_{m_1m_2}
&=
\boldsymbol{\Gamma}_{m_1m_2}\odot
\big(\mathbf{S}_{m_1}\odot \mathbf{S}_{m_2}^{\ast}\big)\\[2pt]
\mathbf{V}_{m_1m_2}
&=
\boldsymbol{\Gamma}_{m_1m_2}\odot
\big(\mathbf{S}_{m_1}\odot \mathbf{X}_{m_2}^{\ast}
+
\mathbf{X}_{m_1}\odot \mathbf{S}_{m_2}^{\ast}\big)\\
&\quad+
\boldsymbol{\Gamma}_{m_1m_2}\odot
\big(\mathbf{X}_{m_1}\odot \mathbf{X}_{m_2}^{\ast}\big)
\end{aligned}.
\right.
\end{equation}
Here, $\boldsymbol{\Gamma}_{m_1m_2}
\triangleq
e^{-j(\phi_{m_1}-\phi_{m_2})}
(\mathbf{A}_{m_1}\odot \mathbf{A}_{m_2})$ denotes the combined antenna-dependent phase and AGC gain factor, $\mathbf{U}_{m_1m_2}$ denotes the static component, $\mathbf{V}_{m_1m_2}$ collects the motion-related terms, including the desired first-order term, the mirrored term and {\color{black}the second-order by-product}, $\mathcal{O}(\mathbf{W})$ collects first-order noise terms, while $\mathcal{O}(\mathbf{W}^2)$ denotes second-order noise terms, which are neglected in the subsequent derivation under typical operating conditions.

Since the motion-induced component is typically masked by the dominant static component, we estimate the latter after AGC compensation. {\color{black}
Specifically, let $\hat{\mathbf g}_{m_1m_2}\in\mathbb{R}^{K}$ denote the snapshot-wise gain proxy derived from the reported signal strength. We use it to normalize the CACC output along slow time. Under this compensation,} the static term is approximately invariant over slow time and is estimated as $\widehat{\mathbf{U}}_{m_1m_2} = \mathbf{1}_K \widehat{\mathbf{u}}_{m_1m_2}^{\mathsf T}$, where \smash{$\widehat{\mathbf{u}}_{m_1m_2}^{\mathsf T} = \mathbf{1}_K^{\mathsf T} [(\mathbf{H}_{m_1}\odot\mathbf{H}_{m_2}^{\ast}) \oslash
(\hat{\mathbf g}_{m_1m_2}\mathbf 1_N^{T})]/K$}. 
Subtracting this estimate from the gain-normalized CACC result gives \smash{$\widehat{\mathbf{V}}_{m_1m_2} = (\mathbf{H}_{m_1}\odot\mathbf{H}_{m_2}^{\ast}) \oslash
(\hat{\mathbf g}_{m_1m_2}\mathbf 1_N^{T}) - \widehat{\mathbf{U}}_{m_1m_2}$}. 
To cancel factor $\boldsymbol{\Gamma}_{m_1m_2}$, we normalize it by the static estimate as $\mathbf{W}_{m_1m_2} \triangleq \widehat{\mathbf{V}}_{m_1m_2} \oslash \widehat{\mathbf{U}}_{m_1m_2}$. 
Defining the static-normalized motion-induced term as $\tilde{\mathbf{X}}_m \triangleq \mathbf{X}_m\oslash \mathbf{S}_m$, 
this yields
\begin{equation}
\mathbf{W}_{m_1m_2}
\approx
\tilde{\mathbf{X}}_{m_1}
+
\tilde{\mathbf{X}}_{m_2}^{\ast}
+
\tilde{\mathbf{X}}_{m_1}\odot\tilde{\mathbf{X}}_{m_2}^{\ast}.
\label{eq:static_normalize_revised}
\end{equation}
Equation~\eqref{eq:static_normalize_revised} reveals two detrimental terms for motion-induced signal extraction: the conjugate term $\tilde{\mathbf{X}}_{m_2}^{\ast}$ induces mirror ambiguity at $(-f_p^D,-(d_p-d_s))$, while the nonlinear term $\tilde{\mathbf{X}}_{m_1}\odot\tilde{\mathbf{X}}_{m_2}^{\ast}$ generates second-order by-products that distort the desired kinematic signature.

To suppress the dominant first-order mirror component, we apply cyclic antenna pairwise differencing as
\begin{equation}
\begin{aligned}
&\Delta\mathbf{W}_{[m]_M,[m+1]_M} \triangleq \mathbf{W}_{[m+2]_M,[m]_M}^{\ast} 
- \mathbf{W}_{[m+1]_M,[m+2]_M}, \\
& = 
(\tilde{\mathbf{X}}_{[m]_M}
-
\tilde{\mathbf{X}}_{[m+1]_M})
+
\tilde{\mathbf{X}}_{[m+2]_M}^{\ast}
\odot
(\tilde{\mathbf{X}}_{[m]_M}
-\tilde{\mathbf{X}}_{[m+1]_M})
\label{eq:DeltaW_general_revised}
\end{aligned}
\end{equation}
where
\begin{equation}
\left\{
\begin{aligned}
\tilde{\mathbf{X}}_m
&=
\sum_{p=1}^{P}
\frac{\gamma_m(\theta_p)}{\gamma_m(\theta_s)}
\mathbf{b}(f_p^D)\mathbf{g}(r_p)^{\mathsf T}
\odot
\tilde{\boldsymbol{\rho}}_{m,p},
\\
\tilde{\mathbf{X}}_{m_1}\odot \tilde{\mathbf{X}}_{m_2}^{\ast}
&=
\sum_{p,q=1}^{P}
\boldsymbol{\eta}_{m_1,m_2}^{(p,q)}
\odot
\mathbf{b}(f_p^D-f_q^D)
\mathbf{g}(r_p-r_q)^{\mathsf T}.
\label{eq:second_order_exact_revised}
\end{aligned}
\right.
\end{equation}
Here, $[m]_M\triangleq ((m-1)\bmod M)+1$, $\boldsymbol{\eta}_{m_1,m_2}^{(p,q)}
=
{\gamma_{m_1}(\theta_p)}{\gamma_{m_2}^{\ast}(\theta_q)}/({\gamma_{m_1}(\theta_s)}
{\gamma_{m_2}^{\ast}(\theta_s)})
\big(
\tilde{\boldsymbol{\rho}}_{m_1,p}\odot\tilde{\boldsymbol{\rho}}_{m_2,q}^{\ast}
\big)$, $\tilde{\boldsymbol{\rho}}_{m,p}=\boldsymbol{\rho}_{m,p}\oslash \boldsymbol{\rho}_{m,s}$, and $r_p = d_p-d_s$. 

\begin{remark}
Spatial differencing in \eqref{eq:DeltaW_general_revised} removes the first-order mirror term but leaves second-order residuals which do not share the same delay- and AoA-domain structure as the desired motion-induced component. This structural difference motivates the filtering design in the next section.
\end{remark}

\section{Ambiguity-Resolved Micro-Doppler Construction}
\label{section_microdoppler}

While the cyclic pairwise differencing in \eqref{eq:DeltaW_general_revised} suppresses the dominant mirror component, residual second-order by-products remain. To mitigate these residuals, we tailor a multidimensional filtering pipeline that exploits their transform-domain separability from the desired motion-induced response. 
By aggregating the results across the delay and AoA domains, we derive a compact ambiguity-resolved micro-Doppler representation with enhanced per-bin SNR.



\subsection{Delay-Domain Filtering of Residual By-Products}
\label{subsec:rangefft}

We first transform the signal along the subcarrier dimension to obtain a delay-domain representation. This step not only localizes motion-induced energy along the {\color{black}delay axis}, but also separates the desired response from residual second-order by-products that appear at pairwise-difference delay coordinates. For a signal of length $L$ with a padded size $\bar L\ge L$, define zero-padding operator as 
$\mathbf{Z}_{L\rightarrow \bar L}\triangleq [\mathbf{I}_L;0_{(\bar L-L)\times L}] \in \mathbb{C}^{\bar L\times L}$, where $\mathbf{I}_L$ is the $L \times L$ identity matrix and $\mathbf{F}_{\bar L}$ denotes the $\bar L$-point discrete Fourier transform (DFT) matrix. 
Let $D_M(x)$ denote the normalized Dirichlet kernel, defined as
\begin{equation}
D_M(x)\triangleq \frac{1}{M}\sum_{r=0}^{M-1}e^{jrx}
=\frac{\sin(Mx/2)}{M\sin(x/2)}e^{j(M-1)x/2}.
\label{eq:DM_def_revised}
\end{equation}

Assuming the effective scattering gains are approximately frequency-flat within the signal bandwidth, the delay-domain response is
\begin{equation}
\begin{aligned}
 \Delta \mathbf{W}_{m}^{(\mathrm{d})}
&\triangleq
\Delta \mathbf{W}_{[m]_M, [m+1]_M} \mathbf{Z}_{N\rightarrow N_r}^{\mathsf T}
\mathbf{F}_{N_r}^{\mathsf T}/N \\
& \approx
\sum_{p=1}^{P}
\boldsymbol{\beta}_{m}^{(p)}
\odot
\mathbf{b}(f^D_p)\,\mathbf{q}(r_p)^{\mathsf T}
+
\mathbf{R}_{m}^{(\mathrm{d})},
\end{aligned}\label{eq:range_domain_decomp_revised}
\vspace{-.5cm}
\end{equation}
where 
\begin{equation}
   \left\{
\begin{aligned}
\boldsymbol{\beta}_{m}^{(p)}
&=\frac{\gamma_{[m]_M}(\theta_p)}{\gamma_{[m]_M}(\theta_s)}\tilde{\boldsymbol{\rho}}_{[m]_M,p}
- \frac{\gamma_{[m+1]_M}(\theta_p)}{\gamma_{[m+1]_M}(\theta_s)}\tilde{\boldsymbol{\rho}}_{[m+1]_M,p} \\
[\mathbf{q}(r)]_{\ell}&= D_N(-2\pi(r\Delta f/c+\ell/N_r)), \quad \ell = 0,\ldots,N_r-1\\
\mathbf{R}_{m}^{(\mathrm{d})}
&=
\sum_{p,q=1}^{P}
\boldsymbol{\zeta}_{m}^{(p,q)}
\odot
\mathbf{b}(f^D_p-f^D_q)\,\mathbf{q}(r_p-r_q)^{\mathsf T}\\
\boldsymbol{\zeta}_{m}^{(p,q)}
& =
{\gamma_{[m+2]_M}^{\ast}(\theta_q)}/{\gamma_{[m+2]_M}^{\ast}(\theta_s)}
\tilde{\boldsymbol{\rho}}_{[m+2]_M,q}^{\ast}\odot
\boldsymbol{\beta}_{m}^{(p)}
\end{aligned}.
\right.
\label{eqn:delay_fft_term}
\end{equation}

The delay transform in \eqref{eq:range_domain_decomp_revised} suppresses
residual second-order by-products by redistributing them into delay bins different from those of the desired motion-induced response.
Specifically, the target energy is focused around the physical delays $r_p$, whereas the diagonal auto-terms ($p=q$) collapse toward near-zero delay and the off-diagonal cross-terms ($p\neq q$) appear {\color{black}in delay bins corresponding to} $r_p-r_q$. 

\begin{figure*}[t]
	\centering
	\begin{subfloat}[]{
			\includegraphics[width=0.149\linewidth]{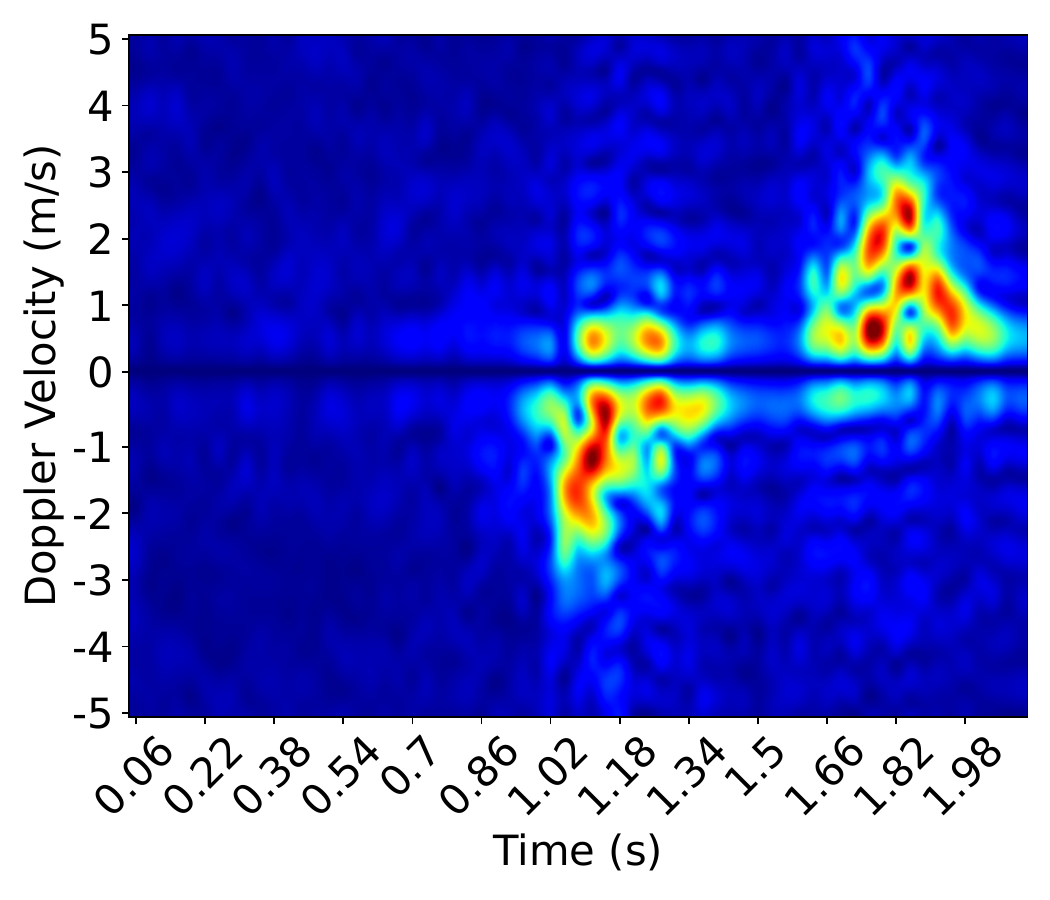}
		\label{subfig:pp_MC_maxdoppler}}
	\end{subfloat}%
	\begin{subfloat}[]{
			\includegraphics[width=0.149\linewidth]{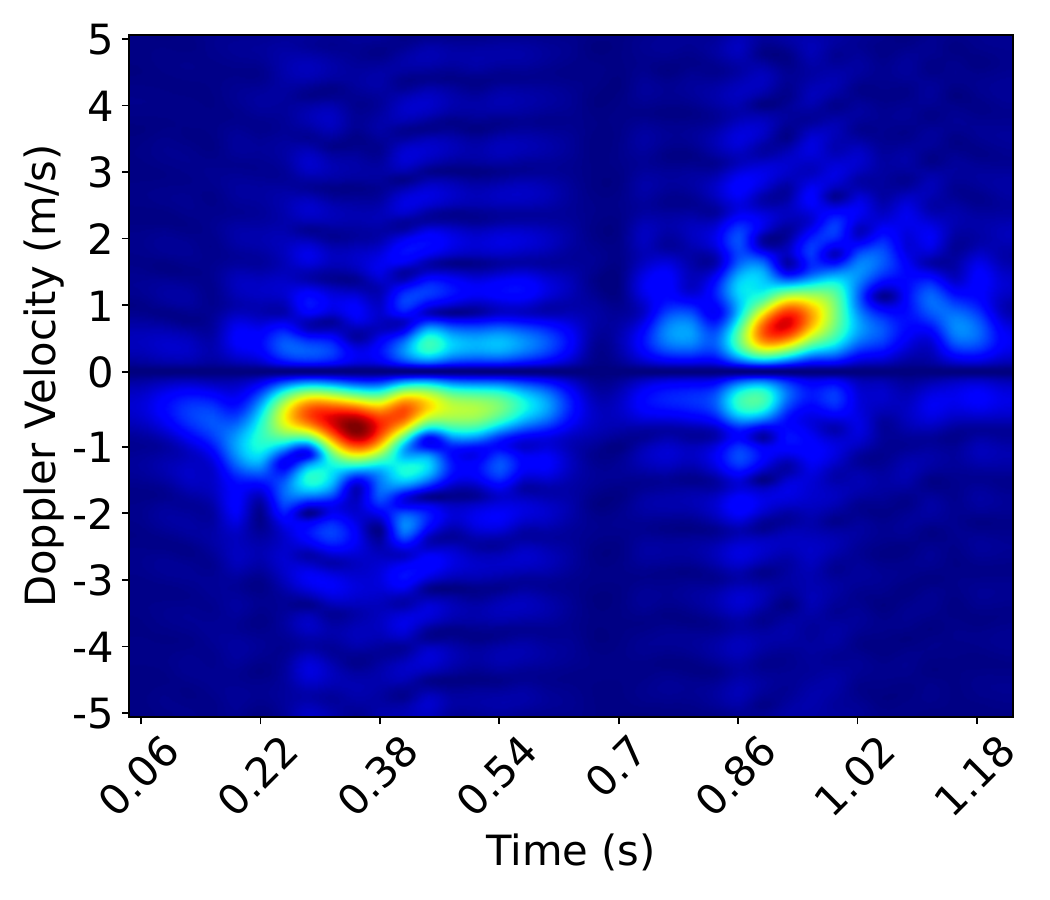}
		\label{subfig:pp_time_doppler_l2o1}}
	\end{subfloat}%
	\begin{subfloat}[]{
			\includegraphics[width=0.149\linewidth]{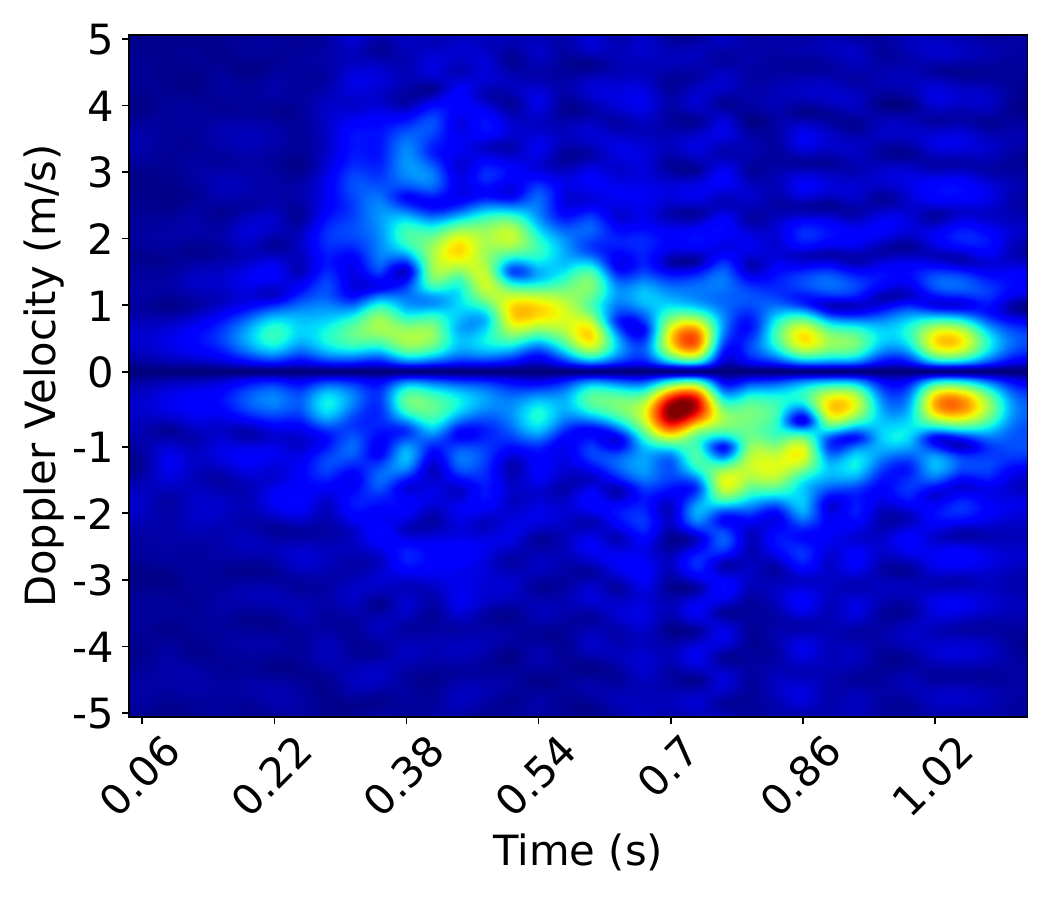}
		\label{subfig:pp_time_doppler_l1o5}}
	\end{subfloat}%
	\begin{subfloat}[]{
			\includegraphics[width=0.149\linewidth]{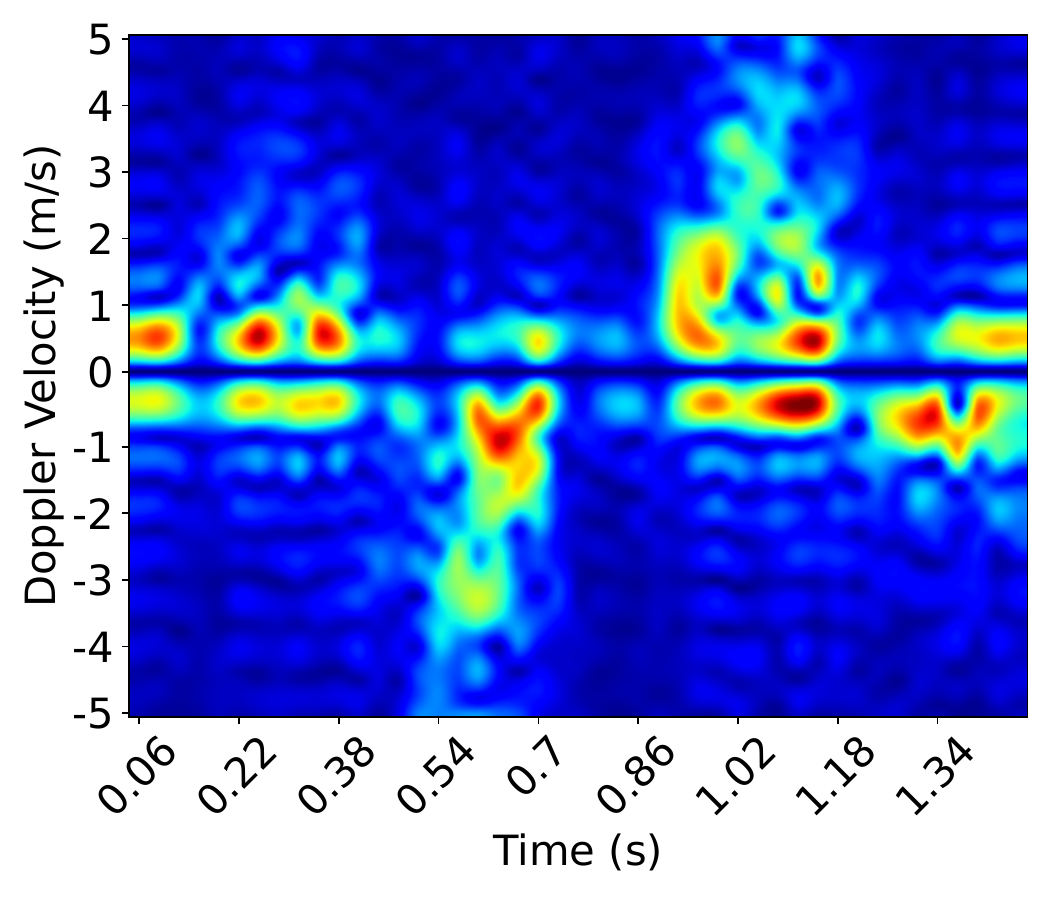}\label{subfig:slide_MC_maxdoppler}}
	\end{subfloat}%
	\begin{subfloat}[]{
			\includegraphics[width=0.149\linewidth]{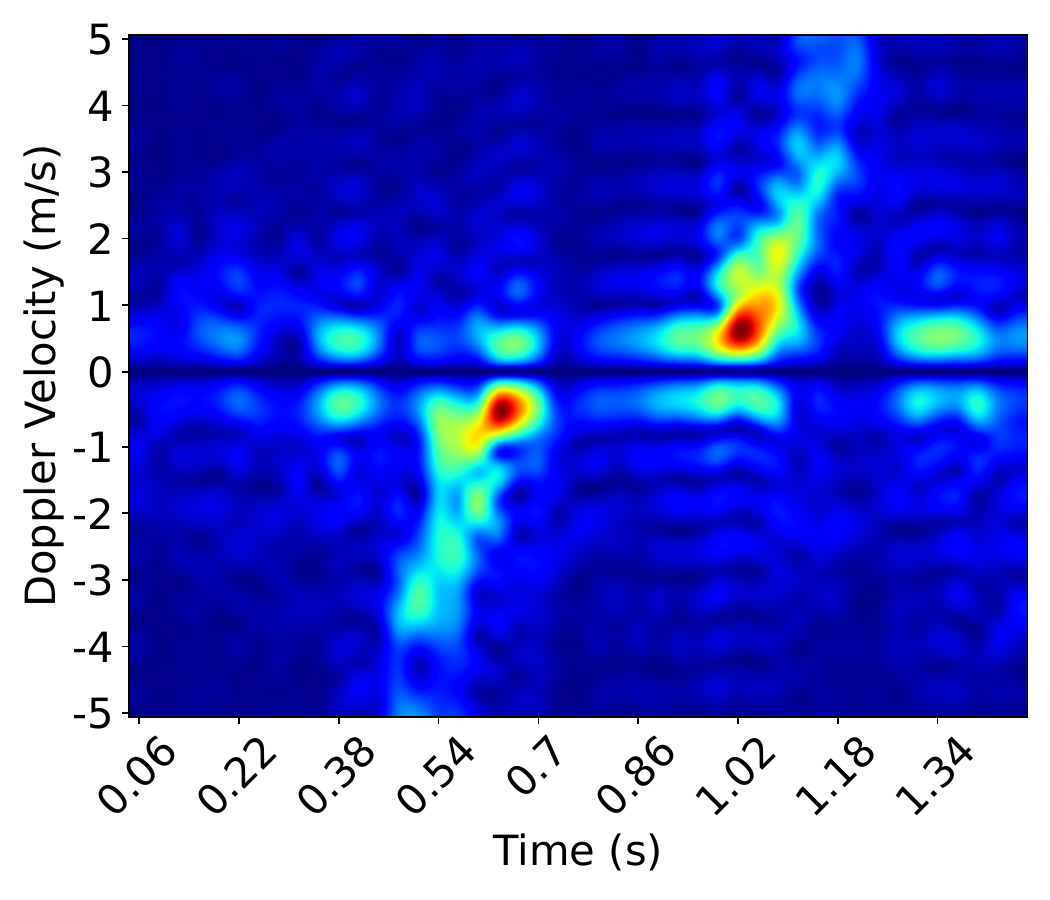}
            \label{subfig:Slide_time_doppler_l2o1}}
	\end{subfloat}%
	\begin{subfloat}[]{
			\includegraphics[width=0.149\linewidth]{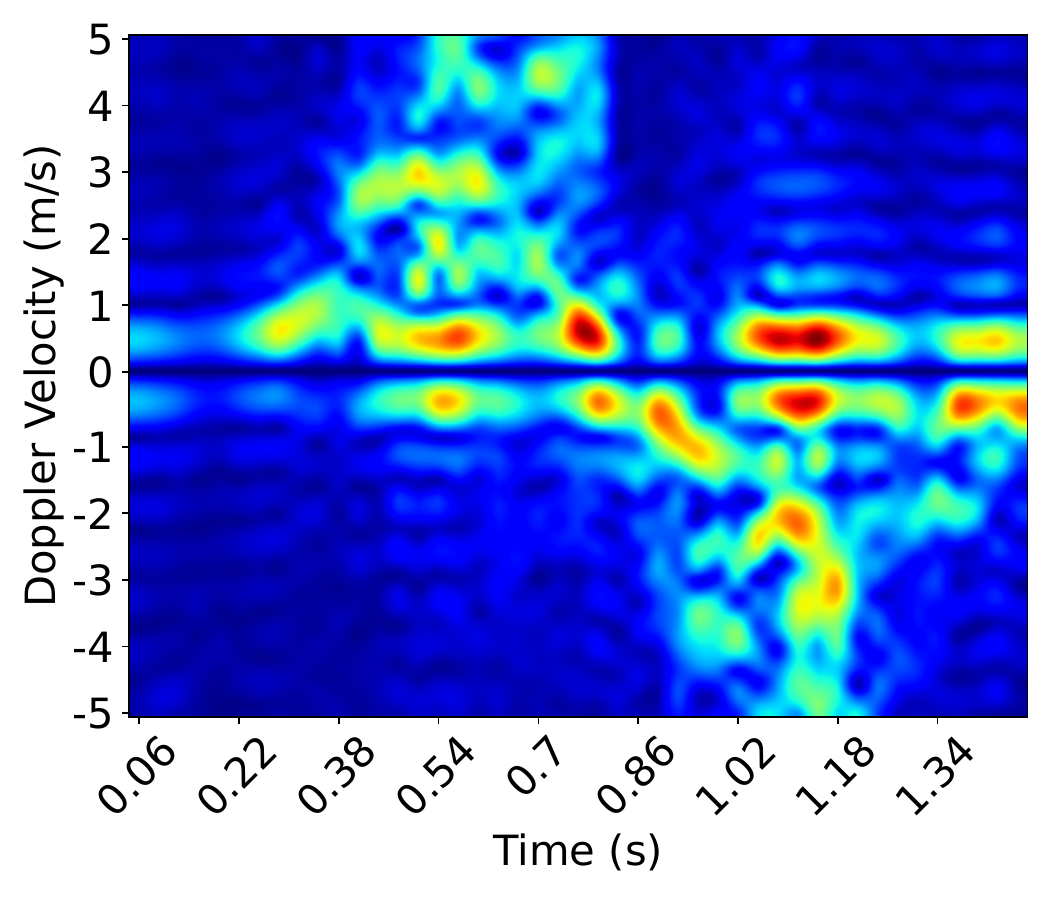}
			\label{subfig:Slide_time_doppler_l1o5}}
	\end{subfloat}
\caption{Impact of sensing geometry on micro-Doppler signatures. (a)-(c) show ``Push \& Pull'' and (d)-(f) show ``Slide''. (a) and (d) Motion along the normal vector, yielding maximum Doppler shifts. 
(b) and (e) Oblique projections resulting in compressed Doppler ridges. 
(c) and (f) Sign flips due to relative motion reversal.}
\label{fig:Slide_time_doppler}
\end{figure*}

\begin{figure*}[t]
    \centering
    \begin{subfloat}[]{
        \includegraphics[width=0.149\linewidth]{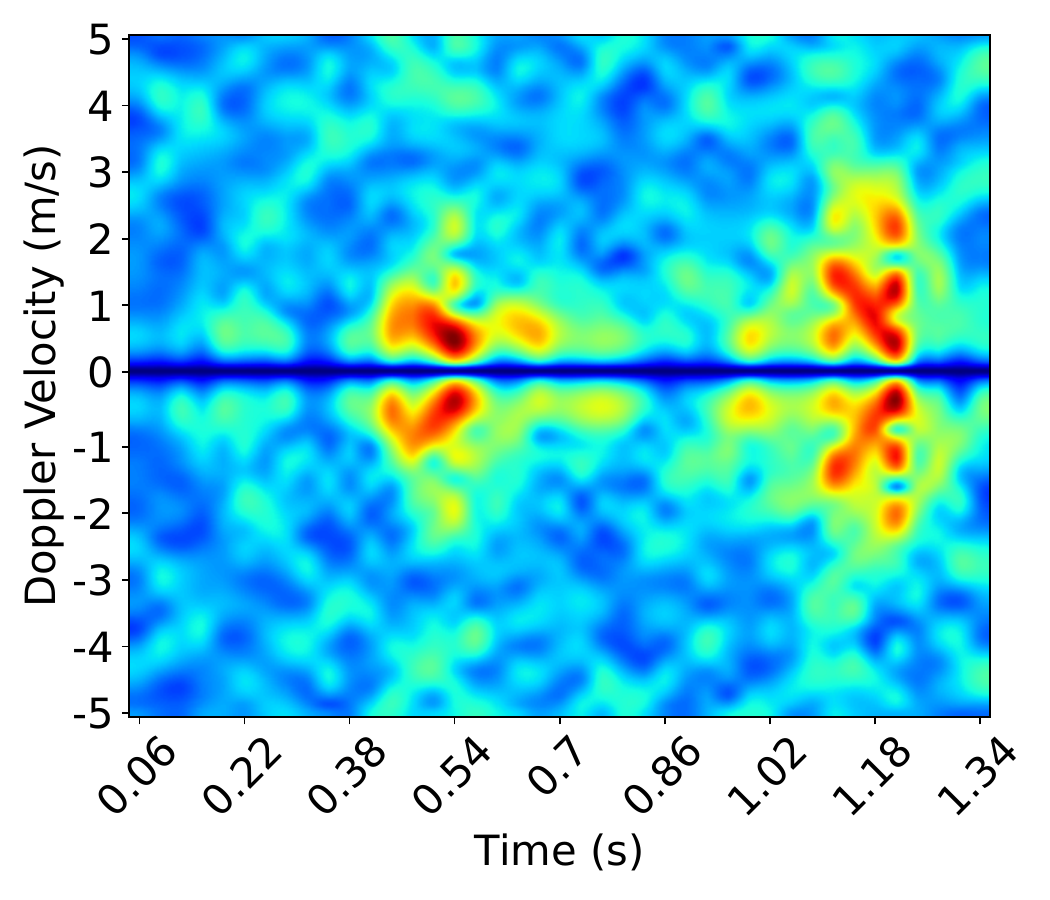}
        \label{subfig:pp_DopplerSpectrum}}
    \end{subfloat}%
    \begin{subfloat}[]{
        \includegraphics[width=0.149\linewidth]{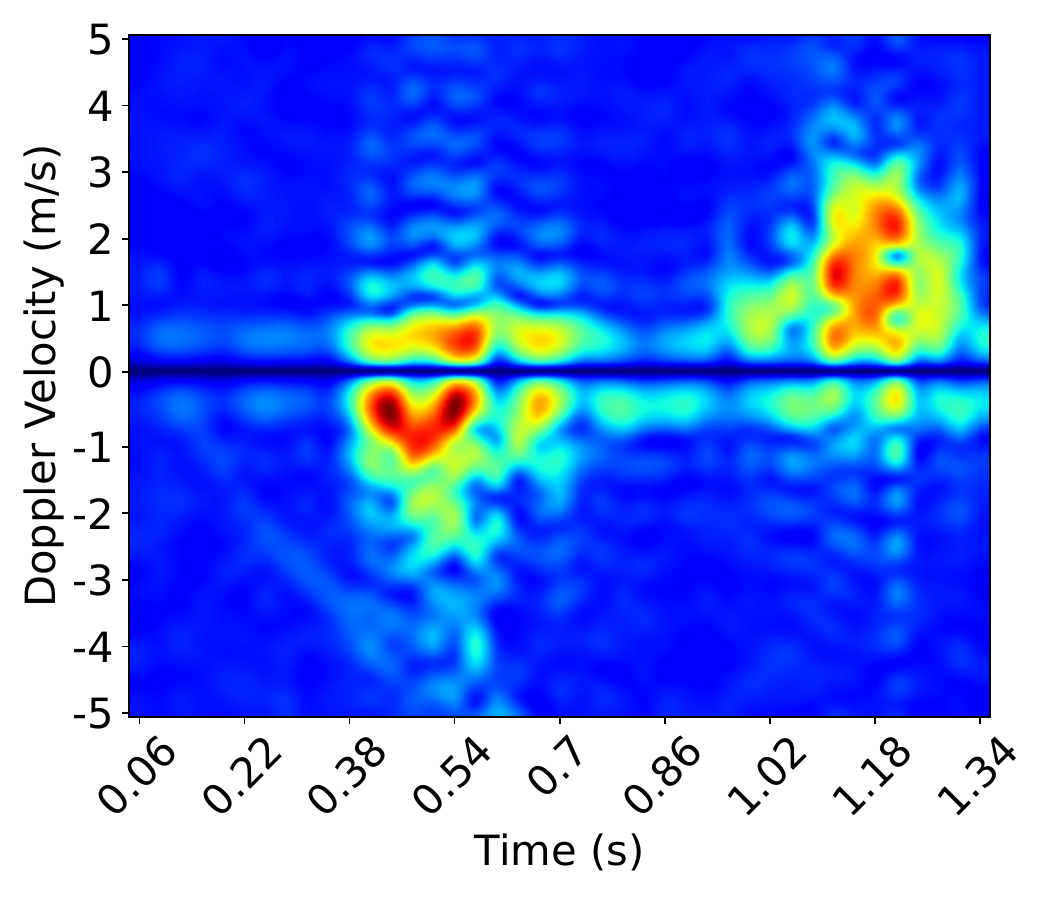}
        \label{subfig:pp_time_doppler}}
    \end{subfloat}%
    \begin{subfloat}[]{
        \includegraphics[width=0.149\linewidth]{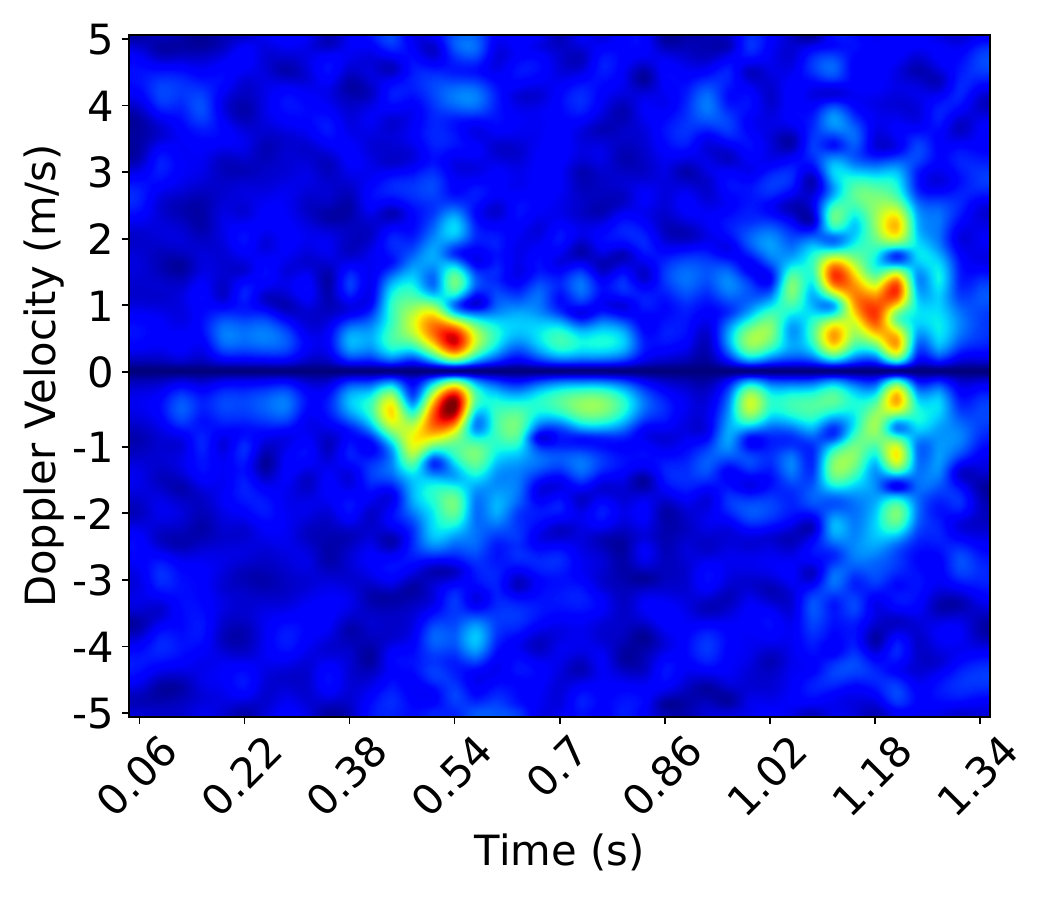}
        \label{subfig:pp_mirror_time_doppler}}
    \end{subfloat}%
    \begin{subfloat}[]{
        \includegraphics[width=0.149\linewidth]{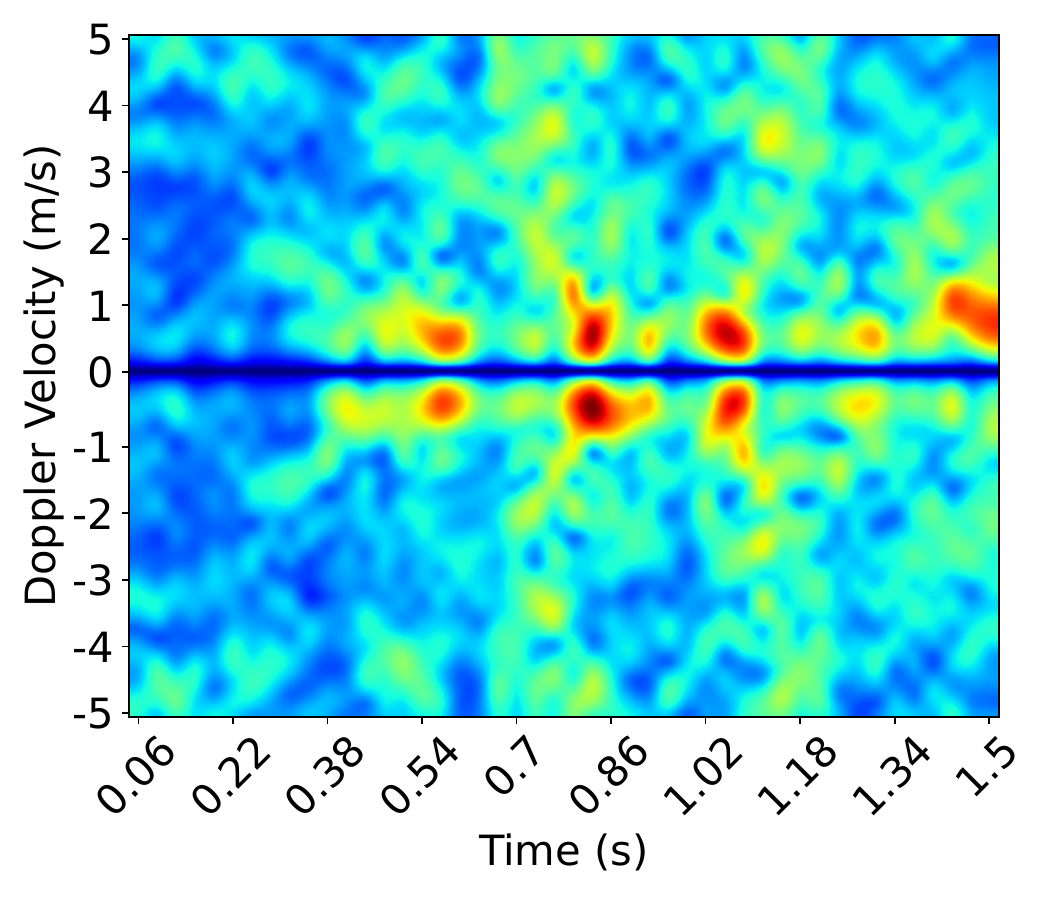}
        \label{subfig:slide_mirror_DopplerSpectrum}}
    \end{subfloat}%
    \begin{subfloat}[]{
        \includegraphics[width=0.149\linewidth]{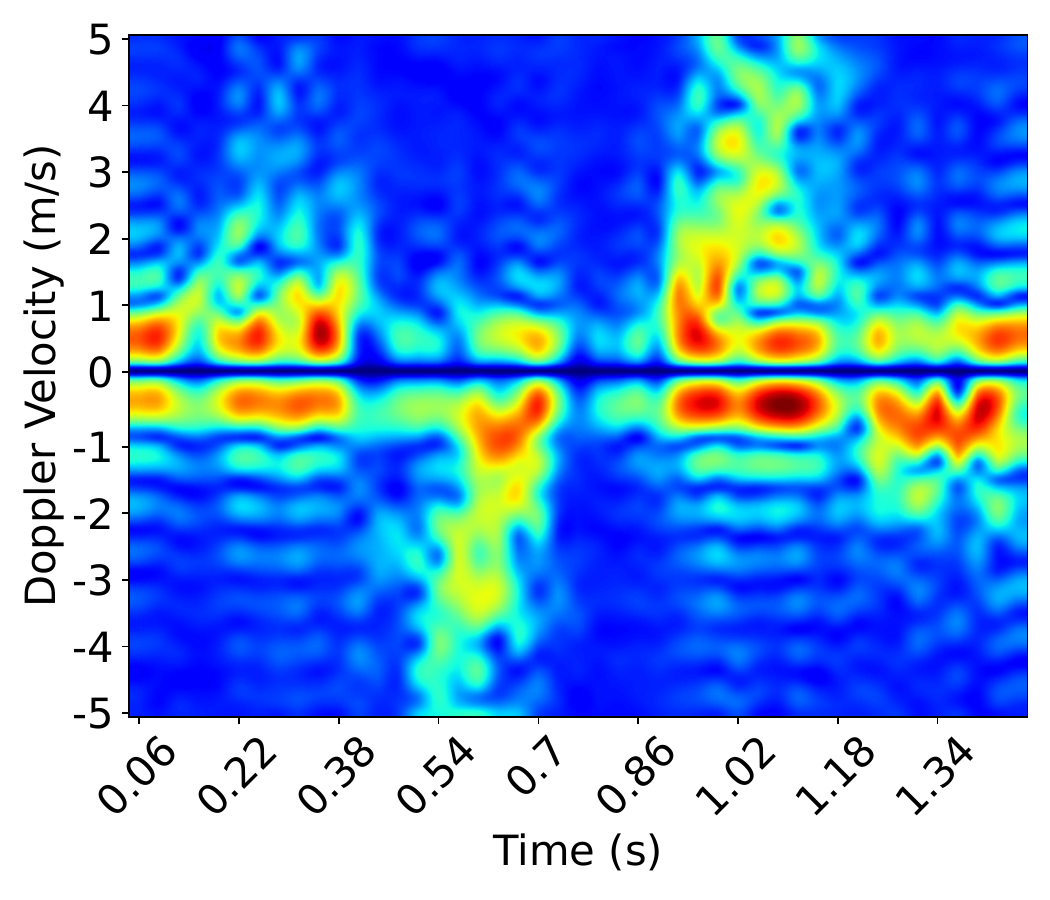}
        \label{subfig:Slide_time_doppler}}
    \end{subfloat}%
    \begin{subfloat}[]{
        \includegraphics[width=0.149\linewidth]{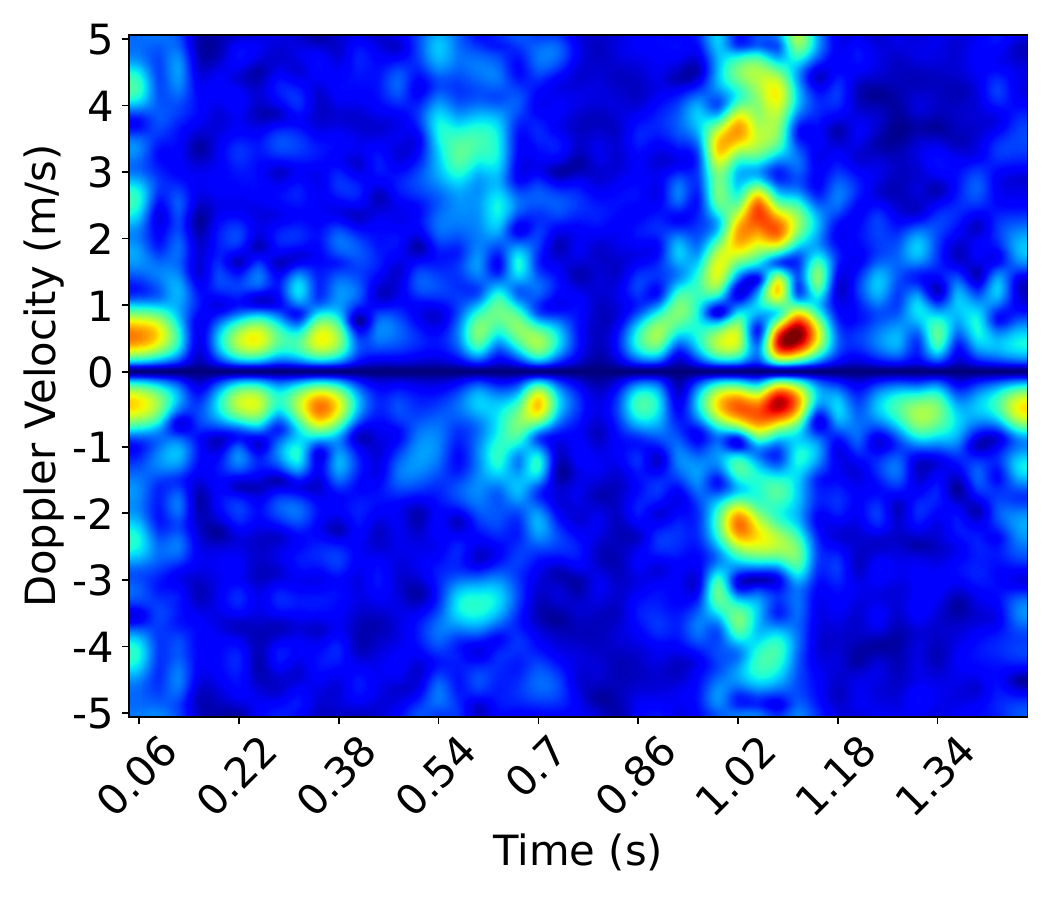}
        \label{subfig:Slide_mirror_time_doppler}}
    \end{subfloat}
\caption{Impact of artifact suppression on micro-Doppler signatures. 
(a)-(c) show ``Push \& Pull'' and (d)-(f) show ``Slide''.
(a) and (d) Raw CACC Doppler. 
(b) and (e) Differential CACC Doppler. 
(c) and (f) CACC + Delay-AoA-Filtered Doppler.
\label{fig3}}
\vspace{-0.5cm}
\end{figure*}

\subsection{AoA-Domain Filtering via Steering-Structure Mismatch}
\label{subsec:aoaestimation}

The desired motion-induced component retains an ordered AoA steering progression, whereas the residual terms produced by cyclic differencing do not share the same antenna-dependent phase structure. Exploiting this structural mismatch, we apply AoA-domain filtering at each delay bin.
Accordingly, for each delay bin $\ell$, we stack $\Delta \mathbf{W}^{(\mathrm{d})}(\ell)
\triangleq
\big[
\Delta \mathbf{W}^{(\mathrm{d})}_{1}(:,\ell),\,
\ldots,\,
\Delta \mathbf{W}^{(\mathrm{d})}_{M}(:,\ell)
\big]^{\mathsf T}
\in\mathbb{C}^{M\times K}$. 
Under the approximation that the coefficient term becomes separable across the antenna-pair index and slow time (${\gamma_m(\theta_p)}/{\gamma_m(\theta_s)} \tilde{\boldsymbol\rho}_{m,p} \approx e^{-j(m-1)\mu_p}\,\tilde{\boldsymbol\rho}_{p}$), the $M_a$-point DFT along the antenna dimension yields
\begin{align}\label{eq:aoa_fft_exact}
&\Delta \mathbf{W}^{(\mathrm{da})}(\ell)
= \mathbf F_{M_a}
\mathbf{Z}_{M\rightarrow M_a}
\Delta \mathbf{W}^{(\mathrm{d})}(\ell)/M \nonumber  \\
& \quad \approx \sum_{p=1}^{P} [\mathbf q(r_p)]_{\ell}\, \mathbf s(\mu_p) (\tilde{\boldsymbol\rho}_{p} \odot \mathbf b(f^D_{p}))^{\mathsf T}  \\ 
&\quad + \sum_{p,q=1}^{P} [\mathbf q(r_p-r_q)]_{\ell} \mathbf s^{(2)}_{p,q} \big( 
\tilde{\boldsymbol\rho}_{p} \odot \mathbf b(f^D_{p}) \odot
\tilde{\boldsymbol\rho}^{\ast}_{q} \odot \mathbf b^{\ast}(f^D_{q}) \big)^{\mathsf T}, \nonumber
\end{align}
where $\mu_p=-2\pi \Delta d/\lambda(\sin\theta_p-\sin\theta_s)$. Let $u=0,1,\dots,M_a-1$ denote the AoA-bin index with $\phi_u \triangleq {2\pi u}/{M_a}$. The spatial response of the desired term at the $u$-th AoA bin is
\begin{equation}
[\mathbf s(\mu)]_{u} = \big(1-e^{-j\mu}\big) D_M(-\mu-\phi_u) - \frac{1-e^{-jM\mu}}{M}\, e^{-j(M-1)\phi_u}. \label{eq:aoa_dirichlet}
\end{equation}
The second-order spatial response $[\mathbf s^{(2)}_{p,q}]_u$ is dominated by a scaled Dirichlet kernel $D_{M-2}\big(-(\mu_p-\mu_q+\phi_u)\big)$ together with minor boundary-induced correction terms, while its exact closed-form expression is omitted for space constraints.

The AoA transform in \eqref{eq:aoa_fft_exact} serves as a spatial filter to explicitly suppress the influence of residual second-order by-products. 
For example, under the $M=3$ configuration, the desired term follows an ordered steering vector proportional to $[1,e^{-j\mu_p},e^{-j2\mu_p}]^{\mathsf T}$, whereas the second-order residual becomes $[e^{-j3\mu_p},e^{-j\mu_p},e^{-j2\mu_p}]^{\mathsf T}$, as follows from \eqref{eq:DeltaW_general_revised}. Hence, the residual terms are not coherently focused at the target AoA bin, which reduces their impact on the subsequent processing. 

\subsection{Ambiguity-Resolved Doppler Estimation}\label{subsec:dopplerfft}


After filtering the second-order by-products in the delay-AoA domains, the signal is dominated by slow-time Doppler variations. To extract these kinematics, we apply a $K_d$-point DFT along the slow-time dimension for each delay bin $\ell$, expressed as $\Delta \mathbf W^{(\mathrm{dad})}(\ell) = \Delta \mathbf W^{(\mathrm{da})}(\ell)\,
\mathbf{Z}_{K\rightarrow K_d}^{\mathsf T}\mathbf F_{K_d}^{\mathsf T}/K$. 
Under the assumption that the reflection coefficient remains quasi-static within a CPI, such that $\tilde{\boldsymbol\rho}_{p}\approx \tilde\rho_{p}\mathbf 1_K$, the response in the Doppler domain can be expressed as: 
\begin{align}\label{eq:doppler_fft_sep}
\Delta \mathbf W^{(\mathrm{dad})}(\ell)
&\approx
\sum_{p=1}^{P}
[\mathbf q(r_p)]_{\ell}\,
\mathbf s(\mu_p)\,
\tilde\rho_{p}\,
\boldsymbol\psi(f^D_p)^{\mathsf T} \\
&\quad
+
\sum_{p,q=1}^{P}
[\mathbf q(r_p-r_q)]_{\ell}\,
\mathbf s^{(2)}_{p,q}\,
\tilde\rho_{q}^{\ast}\tilde\rho_{p}\,
\boldsymbol\psi(f^D_p-f^D_q)^{\mathsf T}, \nonumber
\end{align}
where $\nu = 0,1,\dots,K_d-1$ denotes the Doppler-bin index, and $[\boldsymbol\psi(f_D)]_{\nu}
=
D_K\left(-2\pi(f_D\Delta t+{\nu}/{K_d})\right)$. 

The resulting delay-AoA-Doppler (DAD) transform completes the multidimensional focusing of the desired motion-induced response while suppressing residual second-order by-products. As shown in \eqref{eq:doppler_fft_sep}, the desired component is concentrated around its physical delay, AoA-bin, and Doppler coordinates. Although non-integer bins introduce Dirichlet-kernel leakage, most of the target energy remains localized around $\nu^\star=\mathrm{mod}(-f_p^D\Delta t\,K_d, K_d)$, thereby preserving the physically meaningful Doppler trajectory. In contrast, the residual terms remain structurally mismatched with the desired component across the transform domains: the diagonal auto-terms ($p=q$) collapse toward near-zero delay and Doppler, whereas the off-diagonal cross-terms ($p\neq q$) are dispersed over pairwise-difference delay-Doppler coordinates and do not follow the ordered AoA steering structure required for coherent focusing. Therefore, the DAD transform yields a less ambiguous and more interpretable Doppler representation.

\subsection{AoA-Delay-Aware Micro-Doppler Aggregation}
\label{subsec:microDoppler}

Although the preceding multidimensional processing suppresses residual second-order by-products and focuses the desired response, 
the motion energy is still spread over adjacent delay-AoA bins because of the finite bandwidth and limited number of antennas in practical systems.
To construct a more stable micro-Doppler representation, we aggregate the DAD response over the delay and AoA domains. 
Let $[\Delta \mathbf{W}_{t}^{(\mathrm{dad})}(\ell)]_{u,\nu}$ denote the response at the $t$-th CPI, the $\ell$-th delay bin, the $u$-th AoA bin, and the $\nu$-th Doppler bin. The aggregated micro-Doppler feature is defined as
\begin{equation}
\mu_D(t,\nu)
\triangleq
\frac{1}{N_r M_a}
\sum_{\ell=0}^{N_r-1}\sum_{u=0}^{M_a-1}
\left|
[\Delta \mathbf{W}_{t}^{(\mathrm{dad})}(\ell)]_{u,\nu}
\right|,
\label{eq:microdoppler_agg}
\end{equation}
and collecting $\mu_D(t,\nu)$ over all $t=1,\dots,T$ and $\nu=0,\dots,K_d-1$ yields the final spectrogram $\boldsymbol{\mu}_D \in \mathbb{R}^{T\times K_d}$. 
Under the common approximation that the residual fluctuations of the per-bin magnitude responses around their local means are zero-mean, mutually uncorrelated across delay-AoA bins, and have a common variance $\sigma^2$, the averaging operation in \eqref{eq:microdoppler_agg} reduces the variance of the aggregated residual by a factor of $N_rM_a$, yielding a smoother and more stable Doppler trajectory with an improved effective bin-level SNR. 

Furthermore, the resulting signature $\boldsymbol{\mu D}$ preserves kinematic ridge patterns, making it more stable under geometric transformations induced by environmental domain shifts \cite{9731802}.
As illustrated in Fig.~\ref{fig:Slide_time_doppler}, the final constructed micro-Doppler features preserve representative kinematic structures, such as the alternating acceleration--deceleration lobes of ``Push \& Pull'' and the high-frequency concentration of ``Slide''. 
For mechanism inspection, Fig.~\ref{fig3} compares three intermediate processing variants. Raw CACC Doppler is affected by mirror ambiguity and residual second-order by-products, resulting in heavily contaminated Doppler signatures. Differential CACC Doppler suppresses the dominant mirror component, but residual second-order by-products still overlap with the desired motion-induced response, blurring the kinematic ridges. CACC + Delay-AoA-Filtered Doppler weakens residual artifacts through transform-domain filtering, but without cyclic differencing, mirror ambiguity remains and still produces paired Doppler ridges with comparable magnitudes. Compared with these partial-processing variants, the final micro-Doppler representation in Fig.~\ref{fig:Slide_time_doppler} is cleaner, less ambiguous, and exhibits more distinct motion-dependent ridge structures.

\section{Evaluation on Downstream Task}\label{section_deeplearning}

To evaluate the effectiveness of the proposed ambiguity-resolved micro-Doppler representation, we conduct downstream device-free gesture recognition experiments.
 

\subsection{Experiment Setup}


Evaluations use the Widar3.0 dataset \cite{zhang2021widar3}, which contains 105,624 effective samples of six gestures (``Push \& Pull'' (PP), ``Sweep'' (SW), ``Clap'' (CL), ``Slide'' (SL), ``Draw-O'' (DO), and ``Draw-Zigzag'' (DZ)) from 16 users across diverse domain factors (5 locations, 5 orientations, 3 environments) collected over a one-transmitter/six-receiver link at a sampling rate of 1\,kHz. Signals are processed with a 0.128\,s CPI, covering a 32\,m range window at 1\,m resolution and 8 AoA bins in the spatial-frequency domain. The aggregated profiles are temporally interpolated into $\boldsymbol{\mu D} \in \mathbb{R}^{64 \times 128}$.


By offloading random phase-offset removal, mirror suppression, and residual second-order by-product suppression to the proposed physics-informed frontend, the ResNet-50 classifier \cite{he2016deep} serves as the downstream classifier for $\boldsymbol{\mu D}$.
All baseline feature representations are formatted, resized, and evaluated under the same classifier architecture and training settings. The model is trained for 384 epochs (batch size 128) using the Adam optimizer (initial learning
rate $10^{-3}$, multi-step decay) and a frequency-inversely-weighted cross-entropy loss to handle class imbalance.
Following standard practices \cite{zhang2021widar3, li2020wihf}, we use classification accuracy as the primary evaluation metric and confusion matrices for class-wise error analysis.

\subsection{Ablation Study on Artifact Suppression}

\begin{figure}[htbp]
	\centering
	\begin{subfloat}[]{
		\includegraphics[width=0.45\linewidth,height=0.3\linewidth]{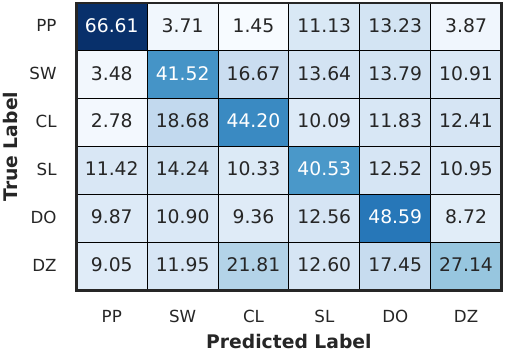}\label{subfig:CACC-Doppler_indomain}}
	\end{subfloat}
    \begin{subfloat}[]{
    \includegraphics[width=0.45\linewidth,height=0.3\linewidth]{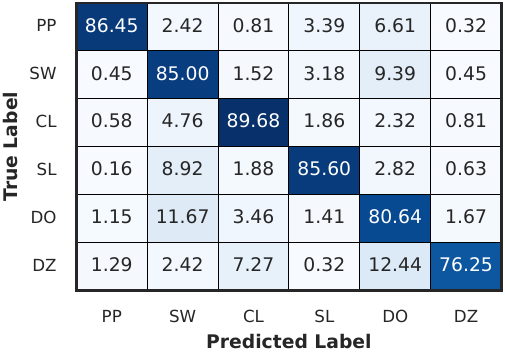}\label{subfig:Mirror-Doppler_indomain}}
	\end{subfloat}
	\begin{subfloat}[]{
		\includegraphics[width=0.45\linewidth,height=0.3\linewidth]{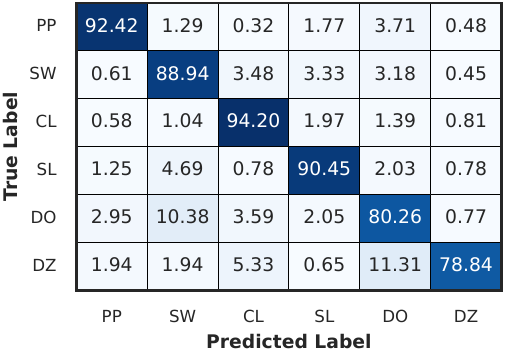}\label{subfig:NOAOA_Doppler_indomain}}
	\end{subfloat}
    \begin{subfloat}[]{
		\includegraphics[width=0.45\linewidth,height=0.3\linewidth]{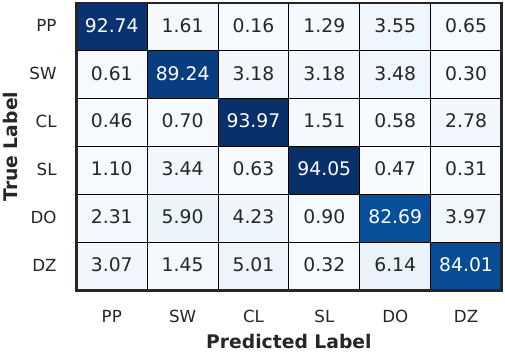}\label{subfig:MicroDopper_indomain}}
	\end{subfloat}
	\caption{Confusion matrices for in-domain gesture recognition with different features. 
    (a) Raw CACC Doppler. 
    (b) Differential CACC Doppler. 
    (c) Differential CACC + Delay-Filtered Doppler. 
    (d) Proposed micro-Doppler.}\label{fig:indomain}
    \vspace{-.4cm}
\end{figure}

Fig.~\ref{fig:indomain} presents ablation results of our proposed micro-Doppler representation against three intermediate variants, validating the effectiveness of each processing stage on the selected Location~1/Rx1 subset with 5,224 samples under a stratified 20/80 train/test split, yielding approximately 4,180 test samples. 
{\color{black} Restricting the evaluation to a single location and receiver controls cross-domain environmental variability, so that the observed performance differences can be more directly attributed to the processing modules. Moreover, using only 20\% of the data for training provides a stricter test of representation quality, showing whether the proposed modules can produce cleaner and more discriminative features even under limited training data.}
Fig.~\ref{fig:indomain}(a) shows that Raw CACC Doppler performs poorly, as the desired kinematic signatures are heavily corrupted by mirror ambiguity and cross-term interference. After dominant mirror suppression, the Differential CACC Doppler in Fig.~\ref{fig:indomain}(b) markedly improves recognition performance. However, residual second-order by-products still overlap with the desired Doppler signatures. Fig.~\ref{fig:indomain}(c) further shows that delay-only processing remains insufficient to fully suppress these residuals. In particular, 11.31\% of ``Draw-Zigzag'' samples are misclassified as ``Draw-O'', indicating persistent confusion between similar kinematic patterns. After introducing AoA-domain filtering, Fig.~\ref{fig:indomain}(d) reduces this confusion to 6.14\% and improves the diagonal concentration for ``Slide'' from 90.45\% to 94.05\%. These results show that AoA-domain filtering provides additional suppression of residual second-order by-products beyond delay-only processing.


\subsection{Cross-Domain Recognition Performance}
\label{subsec:cross_domain}

\begin{figure*}
    \centering
    \begin{subfloat}[{\color{black}Cross-location.}]{
        \includegraphics[width=.23\linewidth]{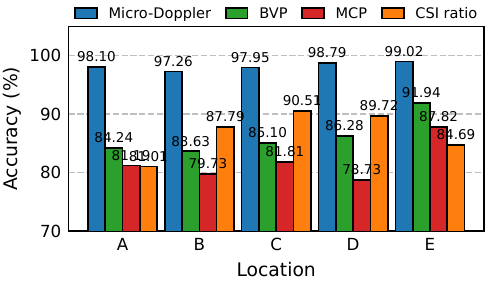}
        \label{subfig:corss_loc_acc}}
    \end{subfloat}
    \begin{subfloat}[{\color{black}Cross-orientation.}]{
        \includegraphics[width=.23\linewidth]{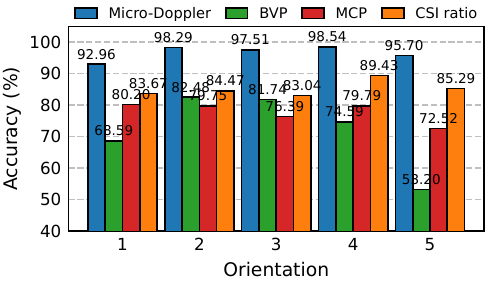}
        \label{subfig:corss_ori_acc}}
    \end{subfloat}
    \begin{subfloat}[{\color{black}Cross-environment.}]{
        \includegraphics[width=.23\linewidth]{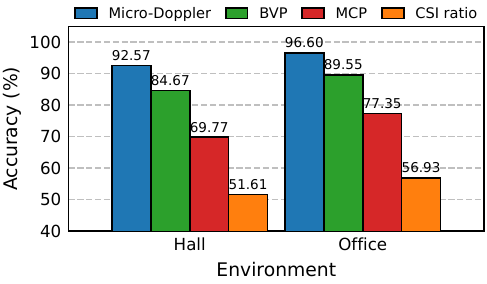}
        \label{subfig:corss_env_acc}}
    \end{subfloat}
    \begin{subfloat}[{\color{black}Cross-user.}]{
        \includegraphics[width=.23\linewidth]{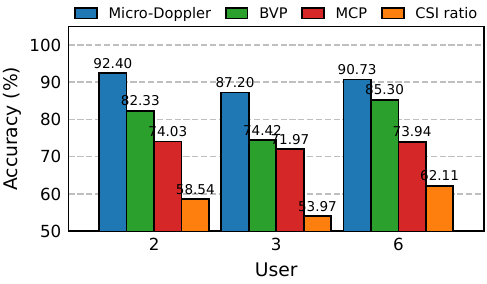}
        \label{subfig:corss_user_ac}}
    \end{subfloat}
    \caption{Accuracy comparison between the proposed micro-Doppler representation and three baselines across domains.}
    \label{fig:compair_loc}
   \vspace{-.4cm}
\end{figure*}

We next evaluate the cross-domain generalization of the proposed micro-Doppler representation on 105,624 effective samples using a leave-one-domain-out protocol across four domain factors. 
For example, in the cross-location evaluation for Location E, the model is trained on samples from Locations A--D and tested only on Location E. Fig.~\ref{fig:compair_loc} compares the proposed representation with three representative feature baselines: the body-coordinate velocity profile (BVP)~\cite{zhang2021widar3}, the motion change pattern (MCP)~\cite{li2020wihf}, and the CSI-ratio feature~\cite{2025CaoReal}. 
Overall, the proposed micro-Doppler representation achieves the highest average accuracy in all four settings among all compared features.

The gains are most pronounced under geometric domain shifts that reshape the observed micro-Doppler through bistatic projection, as shown in Fig.~4(a) and (b). 
In the cross-location setting, the proposed representation maintains accuracies above 97.2\%, whereas BVP and the CSI-ratio fluctuate within the 80\%--90\% range and MCP only reaches 78\%--85\%. In the cross-orientation setting, the proposed representation achieves an average accuracy of 96.60\%, exceeding the compared baselines by about 11 to 24 percentage points. 
This advantage arises because ambiguity resolution and residual second-order by-product suppression preserve cleaner motion-dependent ridge structures, making the representation less sensitive to changes in Doppler magnitude, ridge compression, and sign reversal. 

The proposed method also performs well under environment and user shifts, as shown in Fig.~4(c) and (d). Environment changes alter the multipath composition and background clutter statistics, while user changes introduce larger variations in motion extent, execution speed, and temporal rhythm. 
These factors can modify the duration, thickness, and energy concentration of the Doppler ridges. 
Nevertheless, the proposed representation achieves 94.59\% in the cross-environment setting, whereas the CSI-ratio feature drops to 54.27\%. In the cross-user setting, the proposed representation achieves 90.11\%, whereas BVP, MCP, and the CSI-ratio feature decrease to approximately 81\%, 73\%, and 58\%, respectively.

\section{Conclusion}
\label{section_conclusion}

This paper {\color{black}presents} an ambiguity-resolved micro-Doppler construction framework for asynchronous bistatic sensing. The key idea is to suppress the residual second-order by-products left after differential CACC through delay-AoA-Doppler domain filtering and subsequent aggregation, yielding an ambiguity-resolved micro-Doppler representation. This representation preserves motion-dependent kinematic structure while improving robustness to asynchronous phase distortion and domain shift. Experiments on the Widar3.0 benchmark {\color{black}verify} the effectiveness of the proposed multidimensional filtering and demonstrate strong cross-domain generalization, with average accuracies ranging from 90.11\% to 98.22\% across location, orientation, environment, and user. These results indicate that explicit suppression of residual second-order by-products enables robust micro-Doppler construction in asynchronous bistatic ISAC systems.

\bibliographystyle{IEEEtran}

\bibliography{ref}

\newpage

\end{document}